\definecolor{shadecolor}{rgb}{0.9,0.9,0.9}
\newtheorem{definition}{Definition}
\newtheorem{proposition}{Proposition}
\newtheorem{lemma}[proposition]{Lemma}
\newtheorem{theorem}[proposition]{Theorem}
\newtheorem{corollary}[proposition]{Corollary}
\def\squareforqed{\hbox{\rlap{$\sqcap$}$\sqcup$}}
\def\qed{\ifmmode\squareforqed\else{\unskip\nobreak\hfil
\penalty50\hskip1em\null\nobreak\hfil\squareforqed
\parfillskip=0pt\finalhyphendemerits=0\endgraf}\fi}
\def\endenv{\ifmmode\;\else{\unskip\nobreak\hfil
\penalty50\hskip1em\null\nobreak\hfil\;
\parfillskip=0pt\finalhyphendemerits=0\endgraf}\fi}
\newenvironment{proof}{\noindent \textbf{{Proof~} }}{\hfill $\blacksquare$}
\newcounter{remark}
\newcounter{example}
\newenvironment{example}[1][]{\refstepcounter{example}\par\medskip\noindent%
\textbf{Example~\theexample #1} }{\medskip}
\newcommand{\dbraket}[2]{\langle \hspace{-.8mm} \langle #1 \vert \hspace{-.4mm} #2 \rangle\hspace{-.8mm}\rangle}
\newcommand{\dket}[1]{\vert #1 \rangle \hspace{-.8mm} \rangle}
\mathchardef\ordinarycolon\mathcode`\:
\def\vcentcolon{\mathrel{\mathop\ordinarycolon}}
\newmdenv[skipabove=7pt,
skipbelow=7pt,
backgroundcolor=darkblue!15,
innerleftmargin=5pt,
innerrightmargin=5pt,
innertopmargin=5pt,
leftmargin=0cm,
rightmargin=0cm,
innerbottommargin=5pt,
linewidth=1pt]{tBox}
\newmdenv[skipabove=7pt,
skipbelow=7pt,
backgroundcolor=red!15,
innerleftmargin=5pt,
innerrightmargin=5pt,
innertopmargin=5pt,
leftmargin=0cm,
rightmargin=0cm,
innerbottommargin=5pt,
linewidth=1pt]{rBox}
\newmdenv[skipabove=7pt,
skipbelow=7pt,
backgroundcolor=blue2!25,
innerleftmargin=5pt,
innerrightmargin=5pt,
innertopmargin=5pt,
leftmargin=0cm,
rightmargin=0cm,
innerbottommargin=5pt,
linewidth=1pt]{dBox}
\newmdenv[skipabove=7pt,
skipbelow=7pt,
backgroundcolor=darkkblue!15,
innerleftmargin=5pt,
innerrightmargin=5pt,
innertopmargin=5pt,
leftmargin=0cm,
rightmargin=0cm,
innerbottommargin=5pt,
linewidth=1pt]{sBox}
\definecolor{darkblue}{RGB}{0,76,156}
\definecolor{darkkblue}{RGB}{0,0,153}
\definecolor{blue2}{RGB}{102,178,255}
\definecolor{darkred}{RGB}{195,0,0}
\newcommand{\nc}{\newcommand}
\nc{\rnc}{\renewcommand}
\nc{\lbar}[1]{\overline{#1}}
\nc{\bra}[1]{\langle#1|}
\nc{\ket}[1]{|#1\rangle}
\nc{\ketbra}[2]{|#1\rangle\!\langle#2|}
\nc{\braket}[2]{\langle#1|#2\rangle}
\nc{\proj}[1]{| #1\rangle\!\langle #1 |}
\nc{\avg}[1]{\langle#1\rangle}
\nc{\smfrac}[2]{\mbox{$\frac{#1}{#2}$}}
\nc{\tr}{\operatorname{Tr}}
\nc{\sign}{\operatorname{sign}}
\nc{\ox}{\otimes}
\nc{\dg}{\dagger}
\nc{\dn}{\downarrow}
\nc{\cA}{{\cal A}}
\nc{\cB}{{\cal B}}
\nc{\cC}{{\cal C}}
\nc{\cD}{{\cal D}}
\nc{\cE}{{\cal E}}
\nc{\cF}{{\cal F}}
\nc{\cG}{{\cal G}}
\nc{\cH}{{\cal H}}
\nc{\cI}{{\cal I}}
\nc{\cJ}{{\cal J}}
\nc{\cK}{{\cal K}}
\nc{\cL}{{\cal L}}
\nc{\cM}{{\cal M}}
\nc{\cN}{{\cal N}}
\nc{\cO}{{\cal O}}
\nc{\cP}{{\cal P}}
\nc{\cQ}{{\cal Q}}
\nc{\cR}{{\cal R}}
\nc{\cS}{{\cal S}}
\nc{\cT}{{\cal T}}
\nc{\cU}{{\cal U}}
\nc{\cV}{{\cal V}}
\nc{\cX}{{\cal X}}
\nc{\cY}{{\cal Y}}
\nc{\cZ}{{\cal Z}}
\nc{\cW}{{\cal W}}
\nc{\csupp}{{\operatorname{csupp}}}
\nc{\qsupp}{{\operatorname{qsupp}}}
\nc{\var}{{\operatorname{var}}}
\nc{\rar}{\rightarrow}
\nc{\lrar}{\longrightarrow}
\nc{\polylog}{{\operatorname{polylog}}}
\nc{\wt}{{\operatorname{wt}}}
\nc{\av}[1]{{\left\langle {#1} \right\rangle}}
\nc{\supp}{{\operatorname{supp}}}
\nc{\argmin}{{\operatorname{argmin}}}
\def\x{\xi}
\nc{\RR}{{{\mathbb R}}}
\nc{\CC}{{{\mathbb C}}}
\nc{\FF}{{{\mathbb F}}}
\nc{\NN}{{{\mathbb N}}}
\nc{\ZZ}{{{\mathbb Z}}}
\nc{\PP}{{{\mathbb P}}}
\nc{\QQ}{{{\mathbb Q}}}
\nc{\UU}{{{\mathbb U}}}
\nc{\EE}{{{\mathbb E}}}
\nc{\id}{{\operatorname{id}}}
\nc{\CHSH}{{\operatorname{CHSH}}}
\nc{\be}{\begin{equation}}
\nc{\ee}{{\end{equation}}}
\nc{\bea}{\begin{eqnarray}}
\nc{\eea}{\end{eqnarray}}
\nc{\rU}{\mbox{U}}
\nc{\ob}[1]{#1}
\nc{\SEP}{{\text{\rm SEP}}}
\nc{\NS}{{\text{\rm NS}}}
\nc{\LOCC}{{\text{\rm LOCC}}}
\nc{\PPT}{{\text{\rm PPT}}}
\nc{\EXT}{{\text{\rm EXT}}}
\nc{\Sym}{{\operatorname{Sym}}}
\nc{\ERLO}{{E_{\text{r,LO}}}}
\nc{\ERLOCC}{{E_{\text{r,LOCC}}}}
\nc{\ERPPT}{{E_{\text{r,PPT}}}}
\nc{\ERLOCCinfty}{{E^{\infty}_{\text{r,LOCC}}}}
\nc{\Aram}{{\operatorname{\sf A}}}
\def\grd@save@target#1{%
  \def\grd@target{#1}}
\def\grd@save@start#1{%
  \def\grd@start{#1}}
\tikzset{
  grid with coordinates/.style={
    to path={%
      \pgfextra{%
        \edef\grd@@target{(\tikztotarget)}%
        \tikz@scan@one@point\grd@save@target\grd@@target\relax
        \edef\grd@@start{(\tikztostart)}%
        \tikz@scan@one@point\grd@save@start\grd@@start\relax
        \draw[minor help lines,magenta] (\tikztostart) grid (\tikztotarget);
        \draw[major help lines] (\tikztostart) grid (\tikztotarget);
        \grd@start
        \pgfmathsetmacro{\grd@xa}{\the\pgf@x/1cm}
        \pgfmathsetmacro{\grd@ya}{\the\pgf@y/1cm}
        \grd@target
        \pgfmathsetmacro{\grd@xb}{\the\pgf@x/1cm}
        \pgfmathsetmacro{\grd@yb}{\the\pgf@y/1cm}
        \pgfmathsetmacro{\grd@xc}{\grd@xa + \pgfkeysvalueof{/tikz/grid with coordinates/major step}}
        \pgfmathsetmacro{\grd@yc}{\grd@ya + \pgfkeysvalueof{/tikz/grid with coordinates/major step}}
        \foreach \x in {\grd@xa,\grd@xc,...,\grd@xb}
        \node[anchor=north] at (\x,\grd@ya) {\pgfmathprintnumber{\x}};
        \foreach \y in {\grd@ya,\grd@yc,...,\grd@yb}
        \node[anchor=east] at (\grd@xa,\y) {\pgfmathprintnumber{\y}};
      }
    }
  },
  minor help lines/.style={
    help lines,
    step=\pgfkeysvalueof{/tikz/grid with coordinates/minor step}
  },
  major help lines/.style={
    help lines,
    line width=\pgfkeysvalueof{/tikz/grid with coordinates/major line width},
    step=\pgfkeysvalueof{/tikz/grid with coordinates/major step}
  },
  grid with coordinates/.cd,
  minor step/.initial=.2,
  major step/.initial=1,
  major line width/.initial=2pt,
}
\def\problem@s{}
\newcounter{problems@cnt}
\newcommand{\allproblems}{\problem@s}
\definecolor{colortwo}{rgb}{0.4,0.77,0.17}
\definecolor{colorthree}{rgb}{0.01,0.51,0.93}
\newcommand{\update}[1]{\textcolor{black}{#1}}
\nc{\EPPT}{{E_{\operatorname{PPT}}}}
\nc{\EPPTone}{{E_{\operatorname{PPT}}^{(1)}}}
\nc{\EK}{{E_{\kappa}}}
\begin{document}
\title{Conclusive exclusion of quantum states with group action}
\author{Hongshun Yao}
\email{yaohongshun2021@gmail.com}
\author{Xin Wang}
\email{felixxinwang@hkust-gz.edu.cn}
\affiliation{Thrust of Artificial Intelligence, Information Hub,\\
The Hong Kong University of Science and Technology (Guangzhou), Guangdong 511453, China}

\date{\today}

\begin{abstract}
Retrieving classical information from quantum systems is central to quantum information processing. As a more general task than quantum state discrimination, which focuses on identifying the exact state, quantum state exclusion only requires ruling out options, revealing fundamental limits of information extraction from quantum systems. 
\update{In this work, we study the conclusive exclusion of quantum states generated by group actions, establishing explicit criteria for when such exclusion is possible. For systems with complex symmetries, including finite and compact Lie groups, we derive a sufficient condition for conclusive exclusion based on the initial state's amplitudes and the group's structure.} As applications to special groups such as Abelian groups, we establish necessary and sufficient conditions for conclusive state exclusion and generalize the Pusey-Barrett-Rudolph result to a wider range of scenarios. Finally, we explore zero-error communication via conclusive exclusion of quantum states and derive a lower bound on the feedback-assisted and non-signalling-assisted zero-error capacity of classical-quantum channels generated by group actions.
\end{abstract}

\maketitle

\section{Introduction}\label{sec:introduction}
Extracting information from quantum systems lies at the heart of quantum information theory, with quantum state discrimination being the fundamental model~\cite{barnett2009quantum,Ghosh2001,bae2015quantum,watrous2018theory,Rudolph2003,Chitambar2014b,WW19,Chitambar2014b,Schmid2017,Rosati2017,Takagi2018b,Leditzky2022b,Lami2017,Zhu2024,Zhu2025}. To extract information from a quantum system with several possible states, quantum measurements are performed to determine the system's state. Perfectly discriminating two possible states reveals one bit of information, which happens when the two states are orthogonal. However, it is well-known that nonorthogonal quantum states cannot be distinguished perfectly, even when a large but finite number of copies are available. This limitation raises the question: when state discrimination cannot be effectively performed, what can we conclusively deduce about the quantum system?

A more general and less ambitious task is to reveal only partial but conclusive information about the system, such as excluding possible state options of the quantum system via measurements. This task is known as conclusive state exclusion~\cite{pusey2012reality,bandyopadhyay2014conclusive,Uola2020,mishra2024optimal,McIrvin2024}. If we can rule out at least one state with certainty based on each measurement outcome, we say that conclusive state exclusion is successful. Conclusive state exclusion represents the most fundamental aspect of state discrimination theory, as it establishes the basic framework for drawing conclusive information of the quantum system.

Conclusive state exclusion has gained importance in the context of the foundations of quantum mechanics due to a result by Pusey, Barrett, and Rudolph (PBR) \cite{pusey2012reality}. Specifically, conclusive single-state exclusion was introduced to demonstrate a no-go theorem in quantum theory, stating that no local theory can reproduce the predictions of quantum mechanics. Caves et al.~\cite{Caves2002a} derived the necessary and sufficient conditions for single-state conclusive exclusion of a set of three pure states, and the problem has since been formulated within the framework of semidefinite programs. 
\update{However, for general ensembles of quantum states, deriving explicit, operational conditions on the states and measurements that guarantee conclusive exclusion remains a challenging problem, which limits our understanding of the fundamental bounds of information extraction from quantum systems.}

As conclusive state exclusion characterizes the fundamental limit of extracting exact information from quantum systems, it is fundamentally important to zero-error information theory, particularly from the perspective of determining when a quantum channel has positive zero-error capacity. Shannon described the zero-error capacity of a channel as the maximum rate at which it can transmit information with perfect reliability. The zero-error capacity of quantum channels has been extensively studied over the past two decades~\cite{Cubitt2010,Cubitt2011,Cubitt2011a,Duan2015a,Leung2015,Duan2013,Duan2016,Duan2015}, yielding many interesting phenomena. In particular, Duan, Severini, and Winter~\cite{Duan2015} demonstrated the importance of conclusive exclusion from a list of options by establishing a necessary and sufficient condition for the positive activated feedback-assisted zero-error capacity of a quantum channel.
\update{The work of Duan, Severini, and Winter~\cite{Duan2015} involves a far-reaching extension of the conclusive exclusion of quantum states first explored in the PBR theorem and provides information-theoretic motivations to further study this topic.}

In this paper, we investigate the fundamental limits of conclusive exclusion of quantum states with group actions and explore their applications in quantum foundations and quantum zero-error communication theory. By exploiting the symmetries present in the system and utilizing group representation theory, we derive a simple sufficient condition for conclusive state exclusion under general finite group and compact Lie group actions in terms of the amplitudes of the seed state and the group structure parameters. As a corollary, we further establish the necessary and sufficient condition for conclusive state exclusion under Abelian group actions and generalize the PBR result~\cite{pusey2012reality} to a wider range of settings. Finally, we demonstrate a lower bound on the feedback-assisted and no-signalling-assisted zero-error capacity of classical-quantum channels generated by group actions. These results provide insights into the fundamental limits of information extraction in quantum systems, contribute to the foundations of quantum mechanics, and have potential applications in quantum information processing tasks such as zero-error communication.

\section{Problem Formulation}\label{sec:problem_formulation}
Let $G$ be a compact Lie group or finite group with $|G|$ elements and $\mathbf{R}(G):=\{U_g|g\in G\}$ be the unitary action. \update{For the continuous group, $G$, we simply replace integrals with sums, and $dg$ with $1/|G|$.} The unitary action $\mathbf{R}(G)$ can be seen as a set of black boxes that perform unknown unitary transforms on the system $\cH$.

\update{We focus on the task of conclusive single-state exclusion~\cite{pusey2012reality}, that is, to perfectly exclude any single-state from a given state ensemble $\cE:=\{(p_j,\rho_j)\}_{j=1}^n$, where state $\rho_j$ occurs with probability $p_j$. Assuming equal prior probabilities for each state, we omit probabilities and focus directly on the set of states. The strategies of conclusive exclusion of the state set can be described by the Positive Operator Valued Measures (POVM) $\{M_j\}$. In specific,  call the prepared state $\sigma$, our goal is to measure on $\sigma$ such that, based on the outcome, one can state that $\sigma\neq\rho_j$ for some $j\in\{1,\cdots,n\}$, which is a more general and less ambitious task compared to state discrimination. Then, the objective of obtaining the optimal strategy for the single-state exclusion can be formulated as a semidefinite program (SDP)~\cite{bandyopadhyay2014conclusive}:
\begin{equation}\label{sdp:single_exclusion}
    \begin{aligned}
        \min&\quad\alpha:=\sum_{j=1}^n\tr[\rho_jM_j]\\
        \text{s.t.}&\quad\sum_{j=1}^nM_j=I,\\
        &\quad M_j\geq0,\,\forall j.
    \end{aligned}
\end{equation}}

\update{In fact, for the quantum state $\sigma$ to be excluded, $\tr[\sigma M_j]$ represents the probability of $\sigma\neq \rho_j$. Clearly, to avoid erroneous judgments (i.e., conclusive exclusion), we need to find an exclusion strategy ${M_j}$ such that $\tr[\rho_jM_j]=0$. Conceptually, if $\alpha=0$, we say that conclusive single-state exclusion of the set of states $\cE$ can be done; otherwise, conclusive exclusion is deemed impossible.}

\update{In this work, we consider the structured ensemble, which is generated by the equiprobable operation of a group action $\textbf{R}(G)$ on the initial quantum state $\ket{\psi}$. Formally, we disregard the prior probabilities and denote the set of states to be excluded as set $\cE_\psi:=\{U_g\ket{\psi}|U_g\in\textbf{R}(G)\}$, where $\ket{\psi}$ is also referred to as a seed pure state and the state set $\cE_\psi$ is termed the group orbit within the context of group theory. We aim to find the optimal measurement $\{M_g\}$ such that $\tr[M_g\ketbra{u_g}{u_g}]=0$ for all $g\in G$, where $\ket{u_g}:=U_g\ket{\psi}$. There is a related fact as follows:}

\begin{lemma}[~\cite{bandyopadhyay2014conclusive}]\label{fact:check_condition}
Let $\{M_g|g\in G\}$ be the POVM. Conclusive single-state exclusion of the group-orbit $\cE_\psi$ can be done if and only if the operator $N:=\sum_{g\in G}\ketbra{u_g}{u_g}M_g$ is Hermitian and satisfies $N\leq\ketbra{u_g}{u_g}$ for all $g\in G$.
\end{lemma}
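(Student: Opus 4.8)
The plan is to prove the biconditional by handling its two implications separately. The ``only if'' direction is elementary --- it in fact gives the sharper conclusion $N=0$ --- whereas the ``if'' direction combines a positivity argument with the normalization $\sum_{g\in G}M_g=I$ and duality for the program~\eqref{sdp:single_exclusion}.

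For the forward direction, suppose $\{M_g\}$ realizes conclusive exclusion, so the objective of~\eqref{sdp:single_exclusion} specialized to $\cE_\psi$ vanishes: $\sum_{g\in G}\langle u_g|M_g|u_g\rangle=0$. Since $\langle u_g|M_g|u_g\rangle=\langle u_g|M_g^{1/2}M_g^{1/2}|u_g\rangle\ge 0$ for each $g$, every term must vanish, forcing $M_g^{1/2}\ket{u_g}=0$, hence $M_g\ket{u_g}=0$, hence $\ketbra{u_g}{u_g}M_g=0$ for all $g\in G$. Summing over $g$ yields $N=0$, which is Hermitian and satisfies $0\le\ketbra{u_g}{u_g}$ for every $g$.

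For the converse, assume $N$ is Hermitian with $N\le\ketbra{u_g}{u_g}$ for all $g$. Note first that $\tr N=\sum_{g\in G}\tr[\ketbra{u_g}{u_g}M_g]=\sum_{g\in G}\langle u_g|M_g|u_g\rangle$ is precisely the objective of~\eqref{sdp:single_exclusion}, so it suffices to prove $\tr N=0$. From $\sum_{g\in G}M_g=I$,
\begin{equation*}
\sum_{g\in G}\big(\ketbra{u_g}{u_g}-N\big)M_g=\sum_{g\in G}\ketbra{u_g}{u_g}M_g-N\sum_{g\in G}M_g=N-N=0 .
\end{equation*}
Each $A_g:=\ketbra{u_g}{u_g}-N$ is positive semidefinite, so $\tr[A_gM_g]\ge 0$, and taking the trace of the displayed identity forces $\tr[A_gM_g]=0$, hence $A_gM_g=0$; that is, $\ketbra{u_g}{u_g}M_g=NM_g$ for every $g\in G$.

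The remaining step --- deducing $\tr N=0$ --- is the one I expect to be the main obstacle. Taking the adjoint of $\ketbra{u_g}{u_g}M_g=NM_g$ gives $M_g\ketbra{u_g}{u_g}=M_gN$, and combining the two gives $N^2=\big(\sum_{g\in G}\ketbra{u_g}{u_g}M_g\big)N=\sum_{g\in G}\langle u_g|M_g|u_g\rangle\,\ketbra{u_g}{u_g}$, whence $\tr N^2=\tr N$. I would then finish by exploiting the orbit structure: $\sum_{g\in G}\ketbra{u_g}{u_g}$ is, up to normalization, the $G$-twirl of $\ketbra{\psi}{\psi}$ and commutes with every element of $\mathbf{R}(G)$, while the formula for $N^2$ confines $N^2$ to the cone generated by the rank-one projectors $\ketbra{u_g}{u_g}$; intersected with the constraints $N\le\ketbra{u_g}{u_g}$ this should leave no room for a strictly positive $\tr N$. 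Alternatively --- the route of~\cite{bandyopadhyay2014conclusive} --- one sidesteps the computation by noting that $\big(\{M_g\},N\big)$ is a feasible pair for~\eqref{sdp:single_exclusion} and its dual $\max\{\tr X:X=X^\dagger,\ X\le\ketbra{u_g}{u_g}\ \forall g\}$ with equal objective values, so complementary slackness identifies it as an optimal pair and $\tr N=0$ follows.
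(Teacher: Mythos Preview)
The paper does not prove this lemma; it is quoted from~\cite{bandyopadhyay2014conclusive} and then \emph{used} (in the proof of Proposition~\ref{thm:abelian}) as an optimality criterion. More importantly, the literal biconditional you set out to prove is not the content of the cited result and is in fact false as you have interpreted it. The statement you need is: \emph{a POVM $\{M_g\}$ is optimal for the SDP~\eqref{sdp:single_exclusion} if and only if $N$ is Hermitian and $N\le\ketbra{u_g}{u_g}$ for all $g$}. Whether conclusive exclusion can be done is then the separate question of whether the optimal value $\tr N$ vanishes.

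Your forward direction is correct and indeed yields $N=0$. Your converse, however, attempts to derive $\tr N=0$ from the two conditions alone, and the paper itself supplies a counterexample: in the $t>0$ branch of the proof of Proposition~\ref{thm:abelian} one constructs a POVM $\{M_g'\}$ whose $N$ is Hermitian, satisfies $N\le\ketbra{u_g}{u_g}$ for every $g$, and has $\tr N=t^2>0$. (You can check that your identity $\tr N^2=\tr N$ holds there too, so that relation does not force $\tr N=0$.) Your ``alternative route'' is right up to the last clause: the pair $(\{M_g\},N)$ is primal/dual feasible with equal objective $\tr N$, hence by weak duality both are optimal --- this \emph{is} the Bandyopadhyay et~al.\ argument --- but optimality does not give $\tr N=0$; it gives $\tr N=\alpha^\star$, and conclusive exclusion corresponds to $\alpha^\star=0$. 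So the gap is not technical but a misreading of what the lemma asserts.
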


More generally, we introduce the reference system $\cR$ and suppose the input state $\ket{\psi}_{\cH\cR}$ belongs to the tensor product system $\cH\otimes \cR$. Then, our focus is the following orbit:\update{
\begin{equation}
    \begin{aligned}
        \cE_\psi^{\cH\cR}:=\{(U_g\otimes I_\cR)\ket{\psi}_{\cH\cR}|g\in G\},
    \end{aligned}
\end{equation}
where $I_\cR$ is the trivial action on the reference system $\cR$.}

Considering that quantum states are generated from group orbits and group symmetry naturally emerges, we introduce group representation theory as the main analytical tool. Specifically, we denote the set of irreducible representations by $\hat{G}$. Then, due to the isotypical decomposition~\cite{goodman2009symmetry}, the system $\cH$ can be written as:
\begin{equation}
    \begin{aligned}
        \cH\cong \bigoplus_{\mu\in \hat{G}} \cH_\mu\otimes \mathbb{C}^{m_\mu},
    \end{aligned}
\end{equation}
where $m_\mu$ means the multiplicity of the representation space $\cH_\mu$. We further denote the dimension of the space $\cH_\mu$ by $d_\mu$. \update{Notice that one can choose the dimension of the reference system satisfying $d_\cR\geq\max_{\mu\in\hat{G}}d_\mu/m_\mu$. Without loss of generality, we asssume $m_\mu=d_\mu$ for all $\mu\in\hat{G}$. As a result, the whole system has the following decomposition~\cite{hayashi2025indefinite,bisio2010optimal}:}
\begin{equation}
    \begin{aligned}
        \cH\otimes\cR\cong\bigoplus_{\mu\in \hat{G}} \cH_\mu\otimes \mathbb{C}^{d_\mu}.
    \end{aligned}
\end{equation}
We denote $\{\ket{\psi_j^\mu}|j=1,\cdots,d_\mu\}$ and $\{\ket{\phi_j^\mu}|j=1,\cdots,d_\mu\}$ are bases for $\cH_\mu$ and $\mathbb{C}^{d_\mu}$, respectively. The maximally entangled vector between $\cH_\mu$ and $\mathbb{C}^{d_\mu}$ is defined as
\begin{equation}
    \begin{aligned}
        \dket{V_\mu}_{\cH_\mu\mathbb{C}^{d_\mu}}:=\sum_{j=1}^{d_\mu}\ket{\psi_j^\mu}\ket{\phi_j^\mu},
    \end{aligned}
\end{equation}
\update{where the notation $\dket{\cdot}$ denotes the row vectorization of unitary $V_\mu:=\sum_{j=1}^{d_\mu}\ketbra{\psi_j^\mu}{\phi_j^\mu}$ on the Hilbert space $\cH_\mu\otimes\mathbb{C}^{d_\mu}$, i.e., $\dket{V_\mu}=\sum_{j,k=1}^{d_\mu}\bra{\psi_j^\mu}V_\mu\ket{\phi_k^\mu}\ket{\psi_j^\mu}\ket{\phi_k^\mu}$.} More elements can be found in the appendix.~\ref{sec:pre-rep}.

In the following, we are going to investigate the necessary and sufficient conditions to perfectly exclude a single state from the state set under different group actions, especially the specific form of the seed pure state and POVM, and explore its applications in quantum communication.

\section{Main results}\label{sec:main_results}
\subsection{Finite group and compact Lie group}\label{sec:general_group}
In this section, we focus on the group action $\mathbf{R}(G)$ of a finite group $G$. It allows us to analyze the condition of exclusive exclusion in broader contexts. 
\update{Notice that our result also holds for any compact Lie group because our derivation relies on the isotypical decomposition of the Hilbert space, a mathematical structure that is equally applicable to the representations of compact Lie groups.}

\update{For the case of excluding the state in a group-orbit $\cE_\psi^{\cH\cR}$,} we simplify the original SDP in Eq.~\eqref{sdp:single_exclusion} to the following form:
\begin{equation}\label{eq:sdp_primal}
    \begin{aligned}
        \min&\quad\tr[\ketbra{\psi}{\psi}M]\\
        \text{s.t.}&\quad M\geq 0\\
        &\quad\tr_{\cH_\mu}[P_\mu M P_\mu]=d_\mu I_{d_\mu},\quad \forall \mu\in \hat{G}
    \end{aligned}
\end{equation}
where $P_\mu$ denotes the projector on space $\cH_\mu\otimes \mathbb{C}^{d_\mu}$. \update{This implies that whether the conclusive single-state exclusion can be done depends on the choice of the initial pure state $\ket{\psi}$ and the positive operator $M$. Therefore, in order to perfectly exclude the state set $\cE_\psi^{\cH\cR}$, we need to construct a semidefinite operator that satisfies the SDP constraint in Eq.~\eqref{eq:sdp_primal} and possesses a zero eigenvalue. Specifically, we provide it as follows:
\begin{equation}
    \begin{aligned}
        M:=\ketbra{\phi}{\phi},\quad\ket{\phi}:=\frac{1}{\sqrt{|G|}}\bigoplus_{\mu}\sqrt{d_\mu}\dket{V_\mu}.
    \end{aligned}
\end{equation}
Since the construction of pure states forming operator $M$ is not unique, this precludes us from identifying the necessary and sufficient with respect to the seed pure states and measurement operators conditions for the conclusive single-state exclusion task.} Therefore, sufficient conditions from the perspective of operational significance are of great significance.

\begin{theorem}\label{thm:general_case}
    For a given finite group $G$ and its unitary action $\mathbf{R}(G):=\{U_g|g\in G\}$, conclusive single-state exclusion of the group-orbit \update{$\cE_\psi^{\cH\cR}:=\{(U_g\otimes I_\cR)\ket{\psi}_{\cH\cR}|g\in G\}$} can be done if the following inequality holds,
    \begin{equation}\label{eq:if_condition}
        \begin{aligned}
        d_{\mu_0}|a_{\mu_0}|\leq \sum_{\mu\in \hat{G},\mu \neq \mu_0}d_\mu |a_\mu|,
        \end{aligned}
    \end{equation}
    where $\ket{\psi}_{\cH\cR}\in\cH\otimes \cR$ is written as a direct sum of the maximally entangled state between irreducible space $\cH_\mu$ and multiplicity space $\mathbb{C}^{d_\mu}$, i.e., $\ket{\psi}_{\cH\cR}:=\bigoplus_{\mu\in\hat{G}}\frac{a_{\mu}}{\sqrt{d_\mu}}\dket{V_\mu}_{\cH_\mu \mathbb{C}^{d_\mu}}$. The amplitudes satisfy that $d_{\mu_0}|a_{\mu_0}|\geq d_{\mu_1}|a_{\mu_1}| \geq \cdots \geq d_{\mu_j}|a_{\mu_j}|\geq \cdots \geq 0$.
\end{theorem}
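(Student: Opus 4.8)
The plan is to extract from the reduced program~\eqref{eq:sdp_primal} exactly what conclusive exclusion demands, and then hand-build a witness. By the discussion around~\eqref{eq:sdp_primal}, conclusive single-state exclusion of $\cE_\psi^{\cH\cR}$ is possible if and only if there is an operator $M\geq 0$ with $\tr_{\cH_\mu}[P_\mu M P_\mu]=d_\mu I_{d_\mu}$ for all $\mu\in\hat G$ and $\tr[\ketbra{\psi}{\psi}M]=0$; as $M\geq 0$, the last equation is just $M\ket{\psi}=0$. So, granting~\eqref{eq:if_condition}, it is enough to produce one feasible $M$ that kills $\ket{\psi}$. I would use the rank-one, block-diagonal ansatz
\begin{equation}
    M:=\ketbra{\phi}{\phi},\qquad \ket{\phi}:=\frac{1}{\sqrt{|G|}}\bigoplus_{\mu\in\hat G}\sqrt{d_\mu}\,e^{i\theta_\mu}\dket{V_\mu},
\end{equation}
that is, the construction already displayed in the text with a free phase $\theta_\mu$ adjoined to each block, to be fixed at the very end.

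Feasibility holds for every choice of phases: the summands $\dket{V_\mu}$ lie in mutually orthogonal blocks $\cH_\mu\otimes\mathbb{C}^{d_\mu}$, so $P_\mu M P_\mu$ is a real multiple of $\dketbra{V_\mu}{V_\mu}$, and $\tr_{\cH_\mu}\dketbra{V_\mu}{V_\mu}=\sum_j\ketbra{\phi_j^\mu}{\phi_j^\mu}=I_{d_\mu}$; hence $M$ meets the partial-trace constraints of~\eqref{eq:sdp_primal} exactly as the phaseless construction does, the $\theta_\mu$ entering only the off-block part of $M$, which the partial trace ignores. The real requirement is $M\ket{\psi}=\ket{\phi}\braket{\phi}{\psi}=0$, i.e.\ $\braket{\phi}{\psi}=0$. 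Using $\dbraket{V_\mu}{V_\mu}=d_\mu$ and $\ket{\psi}=\bigoplus_{\mu}\frac{a_\mu}{\sqrt{d_\mu}}\dket{V_\mu}$, a one-line computation yields
\begin{equation}
    \braket{\phi}{\psi}=\frac{1}{\sqrt{|G|}}\sum_{\mu\in\hat G}d_\mu\, a_\mu\, e^{-i\theta_\mu},
\end{equation}
so the whole question becomes: can the $\theta_\mu$ be chosen so that $\sum_{\mu\in\hat G}d_\mu a_\mu e^{-i\theta_\mu}=0$?

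Writing $a_\mu=|a_\mu|e^{i\arg a_\mu}$ and absorbing $\arg a_\mu$ into the free phase, this is precisely the question whether planar vectors of prescribed lengths $\ell_\mu:=d_\mu|a_\mu|$, with freely chosen directions, can be laid head-to-tail into a closed polygon. The elementary polygon-closing criterion says this is possible if and only if the longest of the $\ell_\mu$ does not exceed the sum of the remaining ones; under the relabelling $d_{\mu_0}|a_{\mu_0}|\geq d_{\mu_1}|a_{\mu_1}|\geq\cdots$ assumed in the statement, this is exactly~\eqref{eq:if_condition}. Granting~\eqref{eq:if_condition}, one fixes such a configuration, reads off the phases $\theta_\mu$, and the resulting $M$ certifies conclusive exclusion. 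I expect the only points needing real care to be: (i) the polygon-closing lemma in the direction used here, which one can prove by induction on $|\hat G|$ or by observing that $\bigl\{\sum_{\mu}\ell_\mu e^{i\chi_\mu}\bigr\}$ is a rotation-invariant compact set whose set of moduli is an interval containing $0$ precisely when $\max_\mu\ell_\mu\leq\sum_{\nu\neq\mu}\ell_\nu$; and (ii) noting that this argument gives only sufficiency — the converse is taken up separately for Abelian $G$, where the rank-one block-diagonal form turns out to be essentially forced.
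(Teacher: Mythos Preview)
Your proposal is correct and follows essentially the same route as the paper: the same rank-one ansatz $M=\ketbra{\phi}{\phi}$ with $\ket{\phi}\propto\bigoplus_\mu\sqrt{d_\mu}\,\dket{V_\mu}$, the same reduction of $\braket{\phi}{\psi}=0$ to $\sum_\mu d_\mu a_\mu e^{-i\theta_\mu}=0$, and the same polygon-closing argument to pick the phases under~\eqref{eq:if_condition}. The only cosmetic differences are that you work directly with the reduced program~\eqref{eq:sdp_primal} rather than the explicit covariant POVM $\{(U_g\otimes I)M_e(U_g^\dagger\otimes I)\}$, and you attach the free phases to $\ket{\phi}$ rather than to the $a_\mu$ (which is cleaner, since the $a_\mu$ are given).
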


\begin{proof}
We are going to demonstrate that there exists a POVM $\{M_g|g\in G\}$ such that $\tr[M_g\ketbra{u_g}{u_g}]=0$, $\ket{u_g}:=(U_g\otimes I_\cR)\ket{\psi}$ (subscripts $\cH\cR$ are omitted for short) such that the sufficient condition in~\eqref{eq:if_condition} holds.

Denote $t:=d_{\mu_0}|a_{\mu_0}|-\sum_{\mu \neq \mu_0} d_{\mu} |a_{\mu}|$. Since $t\leq 0$, there exists $\{\theta_\mu\}$ such that $\sum_\mu d_{\mu}a_{\mu}=0$, where $\theta_\mu$ satisfies $a_\mu=|a_\mu|e^{i\theta_{\mu}}$. We choose the POVM $\{M_g|g\in G\}$, where $M_g=(U_g\otimes I)M_e(U_g^\dagger\otimes I)$ and $M_e:=\frac{1}{|G|}\ketbra{\phi}{\phi}$ for $\ket{\phi}:=\bigoplus_{\mu\in \hat{G}}\sqrt{d_\mu}\dket{V_\mu}_{\cH_\mu \mathbb{C}^{d_\mu}}$. One can verify that $M_e\geq 0$ and the completeness constraint, i.e.,
\begin{equation}\label{eq:completeness constraint}
    \begin{aligned}
        M_G:=\sum_{g\in G}(U_g\otimes I_\cR)M_e(U_g^\dagger\otimes I_\cR)=I_{\cH\cR},
    \end{aligned}
\end{equation}
where Eq.~\eqref{eq:completeness constraint} follows from the fact that the average action on the operator $M_e$ commutes with each action $U_g\otimes I_\cR$, i.e., $[M_G,U_g\otimes I_\cR]=0$. Specifically, it can be written in the following form in terms of Schur's lemma~\cite{bisio2016quantum}:
\begin{equation}\label{eq:schur lemma}
    \begin{aligned}
        M_G=\bigoplus_{\mu\in\hat{G}}I_{\cH_\mu}\otimes \frac{\tr_{\cH_\mu}[P_\mu M_e P_\mu]}{d_\mu},
    \end{aligned}
\end{equation}
where $P_\mu$ denotes the projector on $\cH_\mu\otimes \mathbb{C}^{d_\mu}$ and $\tr_{\cH_\mu}$ is the partial trace on system $\cH_\mu$. It is straightforward to check that $\tr[M_g\ketbra{u_g}{u_g}]=0$, for all $g\in G$, which means that one can use the POVM $\{M_g|g\in G\}$ to complete exclusion perfectly.
\end{proof}

\update{By using an example, we demonstrate that our sufficient condition Eq.~\eqref{eq:if_condition} is not always necessary for non-Abelian groups. We use the group $\textbf{SU}(3)$.}

\begin{example}\label{ex:u3}
    Let $G=\textbf{SU}(3)$ be a unitary group and $\cH=(\mathbb{C}^3)^{\otimes 3}$ be a tensor space of qutrit systems. Suppose the input state $\ket{\psi}_{\cH\cR}$ has the following decomposition:
    \begin{equation}\label{eq:input_state_u3}
        \begin{aligned}
            \frac{a_{\yng(3)}}{\sqrt{d_{\yng(3)}}}\dket{V_{\yng(3)}}\oplus \frac{a_{\yng(2,1)}}{\sqrt{d_{\yng(2,1)}}}\dket{V_{\yng(2,1)}}\oplus \frac{a_{\yng(1,1,1)}}{\sqrt{d_{\yng(1,1,1)}}}\dket{V_{\yng(1,1,1)}}.
        \end{aligned}
    \end{equation}
     \update{First, we choose a set of state amplitudes that satisfies the condition in Eq~\eqref{eq:if_condition}. As depicted in Fig.~\ref{fig:special_case_u3} (left), this allows the corresponding vectors $\{a_\mu d_\mu\}$ to sum to zero with appropriate phases, which in turn allows for the construction of an orthogonal measurement state $\ket{\phi}$ that makes conclusive exclusion possible.
     Specifically, we set up the modulus of amplitudes as $|a_{\yng(3)}|=\frac{4}{\sqrt{1641}}$, $|a_{\yng(2,1)}|=\frac{5}{\sqrt{1641}}$ and $|a_{\yng(1,1,1)}|=\frac{40}{\sqrt{1641}}$, then, the sufficient condition in Eq.~\eqref{eq:if_condition} is satisfied, and one can construct the POVM as $M_e:=\frac{1}{|G|}\ketbra{\phi}{\phi}$,
    \begin{equation}
        \begin{aligned}
            \ket{\phi}:=\sqrt{d_{\yng(3)}}\dket{V_{\yng(3)}}\oplus \sqrt{d_{\yng(2,1)}}\dket{V_{\yng(2,1)}}\oplus \sqrt{d_{\yng(1,1,1)}}\dket{V_{\yng(1,1,1)}}.
        \end{aligned}
    \end{equation}}

    \update{Second, we choose different amplitudes that violate Eq.~\eqref{eq:if_condition}, as shown in Fig.~\ref{fig:special_case_u3} (right). Even so, conclusive exclusion remains possible. In specific, we set up the amplitudes as $a_{\yng(3)}=\frac{1}{\sqrt{2}}$, $|a_{\yng(2,1)}|=\frac{1}{\sqrt{2}}$ and $|a_{\yng(1,1,1)}|=0$. One can still achieve conclusive exclusion of the state in $\cE^{\cH\cR}_\psi$ by employing the following POVM ($M_e^\prime:=\frac{1}{|G|}\ketbra{\phi^\prime}{\phi^\prime}$):}
    \begin{equation}
        \begin{aligned}
            \ket{\phi^\prime}:=\sqrt{d_{\yng(3)}}\dket{V_{\yng(3)}^\prime}\oplus \sqrt{d_{\yng(2,1)}}\dket{V_{\yng(2,1)}^\prime}\oplus \sqrt{d_{\yng(1,1,1)}}\dket{V_{\yng(1,1,1)}^\prime}.
        \end{aligned}
    \end{equation}
    Taking the Young Diagram $\yng(3)$ for instance, there are $d_{\yng(3)}^2$ maximally entangled vectors between spaces$\cH_{\yng(3)}\otimes\mathbb{C}^{d_{\yng(3)}}$. $\dket{V_{\yng(3)}^\prime}$ is one of those, such that $\dbraket{V_{\yng(3)}^\prime}{V_{\yng(3)}}=0$, which is determined by a Heisenberg-Weyl operator~\cite{khatri2020principles} $W_{z,x}$ for $z\neq0,x\neq0$ and $x,z\in\{0,\cdots,d_{\yng(3)}-1\}$.
\end{example}

\begin{figure}[H]
    \centering
    \includegraphics[width=0.9\linewidth]{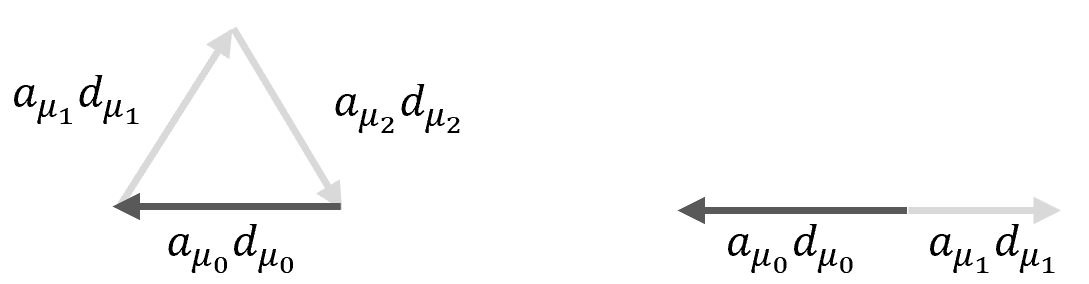}
    \caption{The seed pure states with the form like Eq.~\eqref{eq:input_state_u3} that can be used to perfectly exclude states in $\cE^{\cH\cR}_\psi$. Each arrow in this two diagrams represents a vector $a_\mu d_\mu=|a_\mu|d_\mu e^{i\theta_\mu}.$}
    \label{fig:special_case_u3}
\end{figure}
\subsection{Generalized PBR Game}\label{sec:pbr_game}
Suppose $G$ is an Abelian group. It has one-dimensional irreducible representations, i.e., $d_\mu=1$ and the reference system $\cR$ is not required. Based on the characteristic of the group structure, we find that the necessary condition for general groups can be strengthened to a necessary and sufficient condition of exclusive single-state exclusion, which is only determined by the seed pure state.

\begin{proposition}\label{thm:abelian}
    Given an Abelian group $G$ and its unitary action $\mathbf{R}(G):=\{U_g|g\in G\}$, the conclusive single-state exclusion of the group-orbit \update{$\cE^{\cH}_\psi:=\{U_g\ket{\psi}|g\in G\}$} can be done if and only if 
    \begin{equation}\label{eq:iff_condition_abelian}
        \begin{aligned}
            |a_{\mu_0}|\leq \sum_{\mu\in \hat{G},\mu \neq \mu_0}|a_\mu|,
        \end{aligned}
    \end{equation}
    where $\ket{\psi}:=\sum_{\mu}a_\mu\ket{v_\mu}$ and $\{\ket{v_\mu}\}$ denotes the \update{basis} of $\cH$. The amplitudes satisfy that $|a_{\mu_0}|\geq |a_{\mu_1}| \geq \cdots \geq |a_{\mu_j}|\geq \cdots \geq 0$.
\end{proposition}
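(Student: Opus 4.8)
The plan is to prove the two implications separately, reducing the ``if'' direction to Theorem~\ref{thm:general_case} and giving a short elementary argument for the ``only if'' direction. Throughout I use that, $G$ being Abelian, every $\mu\in\hat G$ is one-dimensional, so $U_g\ket{v_\mu}=\chi_\mu(g)\ket{v_\mu}$ for a character $\chi_\mu$ with $|\chi_\mu(g)|=1$, hence $\ket{u_g}:=U_g\ket{\psi}=\sum_\mu a_\mu\chi_\mu(g)\ket{v_\mu}$; I also use the orthogonality relation $\sum_{g\in G}\chi_\mu(g)\overline{\chi_\nu(g)}=|G|\,\delta_{\mu\nu}$ and that $\{\ket{v_\mu}\}$ is orthonormal.

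For sufficiency, observe that inequality~\eqref{eq:iff_condition_abelian} (for $\mu_0$ attaining $\max_\mu|a_\mu|$) is exactly the statement that the nonnegative numbers $\{|a_\mu|\}$ are the side lengths of a closed, possibly degenerate, polygon, i.e.\ there exist phases $\beta_\mu$ with $\sum_\mu|a_\mu|e^{i\beta_\mu}=0$. Writing $a_\mu=|a_\mu|e^{i\theta_\mu}$ and setting $\ket{\phi}:=\sum_\mu e^{i(\theta_\mu-\beta_\mu)}\ket{v_\mu}$ gives $\braket{\phi}{\psi}=\sum_\mu|a_\mu|e^{i\beta_\mu}=0$ while every coefficient of $\ket{\phi}$ has unit modulus; then $M_g:=\tfrac1{|G|}U_g\ketbra{\phi}{\phi}U_g^\dagger$ is a POVM (the constraint $\sum_g M_g=I$ following from character orthogonality) with $\braandket{u_g}{M_g}{u_g}=\tfrac1{|G|}|\braket{\phi}{\psi}|^2=0$ for every $g$, so conclusive exclusion succeeds. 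This is just the $d_\mu=1$ specialisation of Theorem~\ref{thm:general_case}, with the reference system dropped.

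For necessity, suppose $\{M_g\}$ is a POVM realising exclusion, so $\sum_g\braandket{u_g}{M_g}{u_g}=0$; nonnegativity of each term together with $M_g\succeq0$ forces $M_g\ket{u_g}=0$ for every $g$. Pairing $M_g\ket{u_g}=0$ with $\bra{v_{\mu_0}}$ and isolating the $\mu_0$ term gives $a_{\mu_0}\chi_{\mu_0}(g)\braandket{v_{\mu_0}}{M_g}{v_{\mu_0}}=-\sum_{\mu\neq\mu_0}a_\mu\chi_\mu(g)\braandket{v_{\mu_0}}{M_g}{v_\mu}$. Taking moduli, using $|\chi_\mu(g)|=1$, the operator Cauchy--Schwarz bound $|\braandket{v_{\mu_0}}{M_g}{v_\mu}|\le\sqrt{x_g^{\mu_0}x_g^\mu}$ with $x_g^\mu:=\braandket{v_\mu}{M_g}{v_\mu}\ge0$, and then a second, scalar Cauchy--Schwarz on the $\mu$-sum ($\sum_{\mu\neq\mu_0}|a_\mu|\sqrt{x_g^\mu}=\sum_{\mu\neq\mu_0}\sqrt{|a_\mu|}\cdot\sqrt{|a_\mu|x_g^\mu}$), one reaches $|a_{\mu_0}|^2 x_g^{\mu_0}\le\bigl(\sum_{\mu\neq\mu_0}|a_\mu|\bigr)\bigl(\sum_{\mu\neq\mu_0}|a_\mu|x_g^\mu\bigr)$ (trivially true when $x_g^{\mu_0}=0$). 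Summing over $g$ and invoking completeness, $\sum_g x_g^\mu=\braandket{v_\mu}{\sum_g M_g}{v_\mu}=1$ for each $\mu$, which collapses the bound to $|a_{\mu_0}|^2\le\bigl(\sum_{\mu\neq\mu_0}|a_\mu|\bigr)^2$, i.e.\ \eqref{eq:iff_condition_abelian}.

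The one point that needs care is the ordering of the Cauchy--Schwarz steps: a single crude application yields only the much weaker $|a_{\mu_0}|\le\sqrt{|G|}\sum_{\mu\neq\mu_0}|a_\mu|$, whereas the two-step version above is arranged precisely so that, after summing over $g$, the completeness relation $\sum_g x_g^\mu=1$ removes all dependence on $|G|$; obtaining the sharp constant is the whole content of the argument. (Necessity could alternatively be phrased as the dual SDP having optimum $0$, but the direct computation is shorter.) The compact-Lie-group case, e.g.\ $G=U(1)$, is identical after replacing $\sum_{g\in G}$ by $\int\!dg$ and using the continuous orthogonality of characters.
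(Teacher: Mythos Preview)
Your proof is correct. The sufficiency direction is essentially identical to the paper's (both specialise Theorem~\ref{thm:general_case} to $d_\mu=1$ via the polygon/phase argument). The necessity direction, however, is genuinely different: the paper argues via the dual SDP, constructing an explicit candidate POVM $M_e'=\ketbra{\phi'}{\phi'}$, forming $N=\sum_g\ketbra{u_g}{u_g}M_g'$, and then invoking Lemma~\ref{fact:check_condition} together with Weyl's inequality to certify that this POVM is \emph{optimal} with strictly positive error $\tr[N]=t^2$ whenever $t:=|a_{\mu_0}|-\sum_{\mu\neq\mu_0}|a_\mu|>0$. Your route is more elementary --- two Cauchy--Schwarz steps plus completeness, with no SDP duality or spectral inequalities --- and it directly shows that \emph{any} excluding POVM forces~\eqref{eq:iff_condition_abelian}. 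The trade-off is that the paper's argument yields the exact optimal error probability $t^2$ in the infeasible regime as a by-product, whereas yours only establishes infeasibility without quantifying it.
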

\begin{proof}
We are going to demonstrate that there exists a POVM $\{M_g|g\in G\}$ such that $\tr[M_g\ketbra{u_g}{u_g}]=0$, $\ket{u_g}:=U_g\ket{\psi}$ such that the condition in~\eqref{eq:iff_condition_abelian} holds. We denote $t:=|a_{\mu_0}|-\sum_{\mu\neq \mu_0}|a_\mu|$.

On the one hand, since the dimension of each irreducible representation of an Abelian group equals to one, i.e., $d_\mu=1$, for all $\mu\in\hat{G}$, the sufficient condition in Eq.~\eqref{eq:if_condition} can be reduced to the case of Eq.~\eqref{eq:iff_condition_abelian}. According to Theorem~\ref{thm:general_case}, if $t\leq 0$, the conclusive single-state exclusion of the states in orbit $\cE^{\cH}_\psi$ can be done.

On the other hand, we consider the case of $t>0$. Different from Example~\ref{ex:u3}, the space corresponding to the maximum amplitude is non-degenerate under the action of the Abelian group. This makes it impossible for conclusive exclusion if the conditions are violated $(t>0)$. The main idea of the following proof is to use the dual SDP of conclusive state exclusion~\cite{bandyopadhyay2014conclusive}. Specifically, we denote a state as follows:
\begin{equation}
    \begin{aligned}
        \ket{\phi^\prime}:=\frac{1}{\sqrt{|G|}}(e^{i\theta_{\mu_0}}\ket{v_{\mu_0}}-\sum_{\mu\neq{\mu_0}}e^{i\theta_{\mu}}\ket{v_{\mu}}).
    \end{aligned}
\end{equation}
Denote $M_e:=\ketbra{\phi^\prime}{\phi^\prime}$, then the POVM is given by $\{M_g^\prime:=U_gM_eU_g^\dagger|g\in G\}$. According to lemma~\ref{fact:check_condition}, to make sure that it is an optimal measurement, $N$ should be Hermitian and satisfy that $N\leq \ketbra{u_g}{u_g}$ for all $g\in G$, where $N:=\sum_{g\in G}\ketbra{u_g}{u_g}M_g^\prime$. Under the basis $\{v_\mu\}$, we further rewrite it as follows:
\begin{equation}
    \begin{aligned}
        N=t(|a_{\mu_0}|\cdot\ketbra{v_{\mu_0}}{v_{\mu_0}}-\sum_{\mu\neq\mu_0}|a_\mu|\cdot\ketbra{v_\mu}{v_\mu}).
    \end{aligned}
\end{equation}
Let $A_g:=-N+\ketbra{u_g}{u_g}$, note that $A_g=U_gA_eU_g^\dagger$, we only need to consider the eigenvalues of $A_e$. Assume that the eigenvalues of $A_e$ are $\lambda_0({A_e})\leq\lambda_1({A_e})\leq\cdots\leq\lambda_{|G|-1}(A_e)$. Then, by Weyl’s inequality,
\begin{equation}
    \begin{aligned}
        \lambda_k(A_e)\geq\lambda_k(-N)+\lambda_0(\ketbra{\psi}{\psi}),
    \end{aligned}
\end{equation}
we have $\lambda_k(A_e)>0$, for $k=1,\cdots,|G|-1$, which means that $A_e$ has at most one non-positive eigenvalue $\lambda_0(A_e)$. Notice that
\begin{equation}
    \begin{aligned}
        A_e\ket{\phi^\prime}&=t(|a_{\mu_0}|\ketbra{v_{\mu_0}}{v_{\mu_0}}-\sum_{\mu\neq\mu_0}|a_\mu|\ketbra{v_\mu}{v_\mu})\ket{\phi^\prime}+\ket{\psi}\langle\psi|\phi^\prime\rangle\\
        &=\frac{1}{\sqrt{|G|}}(-t\sum_{\mu}a_\mu\ket{v_\mu}+t\ket{\psi})=0,
    \end{aligned}
\end{equation}
then $\lambda_0(A_e)=0$, which means that $N\leq \ketbra{v_g}{v_g}$. Therefore, based on lemma~\ref{fact:check_condition}, $\{M_g^\prime|g\in G\}$ is an optimal measurement. Also, $\tr[N]=t^2$ is strictly positive in this case, and this is the optimal error probability we can achieve when we cannot do the exclusion conclusively.
\end{proof}

Proposition~\ref{thm:abelian} allows us to investigate a kind of specific Abelian group, which plays a main role in the PBR theorem~\cite{pusey2012reality}. Let us first recall the PBR theorem. It is a foundational result in quantum mechanics that addresses the ontological status of the quantum state. Specifically, it aims to determine whether the quantum state is ontic (a physical property of a system) or epistemic (merely information about the system's state). In 2012, Pusey, Barrett, and Rudolph (abbreviated as PBR) proved that quantum states must describe reality directly. This result challenges epistemic interpretations of quantum mechanics, which view the quantum state as representing an observer's knowledge or information about a system. Formally, it is related to the quantum state exclusion task.


By analyzing the natural action $\mathbf{R}(G):=\{g|g\in G\}$ of the Abelian group $G:=\{Z,I\}^{\otimes n}$, we can recover the necessary and sufficient conditions proposed in the PBR theorem. It simplifies the proof of the PBR theorem from the perspective of group action.
\begin{corollary}[Recover the condition 
 of PBR game~\cite{pusey2012reality}]
    For the PBR game, the conclusive single-state exclusion of the state set $\cE_\psi^{\cH}:=\{U_{\vec{x}}\ket{\psi_{\vec{x}}}|\vec{x}\in\{0,1\}^n\}$ can be done if and only if
    \begin{equation}
        \begin{aligned}
            \left(1+\tan\frac{\theta}{2}\right)^{n}\geq 2,
        \end{aligned}
    \end{equation}
    where $\ket{\psi_0}=\cos\frac{\theta}{2}\ket{0}+\sin\frac{\theta}{2}\ket{1}$, $\ket{\psi_1}=\cos\frac{\theta}{2}\ket{0}-\sin\frac{\theta}{2}\ket{1}$, $\theta\in[0,\frac{\pi}{2}]$, and $U_{\vec{x}}:=\otimes_{j=1}^n Z^{x_j}$.
\end{corollary}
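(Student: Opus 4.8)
The plan is to recognize the PBR ensemble as a group orbit and reduce the claim to a one-line binomial identity via Proposition~\ref{thm:abelian}. First I observe that $Z\ket{\psi_0}=\ket{\psi_1}$ and $Z^2=I$, so the ensemble to be excluded is exactly the orbit $\{U_{\vec x}\ket{\psi}\mid \vec x\in\{0,1\}^n\}=\{\ket{\psi_{x_1}}\otimes\cdots\otimes\ket{\psi_{x_n}}\mid \vec x\in\{0,1\}^n\}$ of the seed pure state $\ket{\psi}:=\ket{\psi_0}^{\otimes n}$ under the Abelian group $G=\{I,Z\}^{\otimes n}$ acting naturally on $\cH=(\CC^2)^{\otimes n}$. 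Hence Proposition~\ref{thm:abelian} applies, and everything reduces to computing the amplitudes of $\ket{\psi}$ in the character basis of this action.

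Next I set up that character basis. Since $Z=\ketbra{0}{0}-\ketbra{1}{1}$, the irreducible representations of $\{I,Z\}^{\otimes n}\cong\ZZ_2^n$ are labelled by $\vec\mu\in\{0,1\}^n$ with character $\chi_{\vec\mu}(\vec x)=(-1)^{\vec\mu\cdot\vec x}$, and the one-dimensional eigenspace carrying $\chi_{\vec\mu}$ is spanned by the computational basis vector $\ket{v_{\vec\mu}}:=\ket{\mu_1}\otimes\cdots\otimes\ket{\mu_n}$, because $U_{\vec x}\ket{v_{\vec\mu}}=(-1)^{\vec x\cdot\vec\mu}\ket{v_{\vec\mu}}$. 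Expanding the seed state,
\begin{equation}
\ket{\psi}=\Bigl(\cos\tfrac{\theta}{2}\ket{0}+\sin\tfrac{\theta}{2}\ket{1}\Bigr)^{\otimes n}=\sum_{\vec\mu\in\{0,1\}^n}a_{\vec\mu}\,\ket{v_{\vec\mu}},\qquad a_{\vec\mu}=\Bigl(\cos\tfrac{\theta}{2}\Bigr)^{n-|\vec\mu|}\Bigl(\sin\tfrac{\theta}{2}\Bigr)^{|\vec\mu|},
\end{equation}
where $|\vec\mu|$ is the Hamming weight. Because $\theta\in[0,\tfrac{\pi}{2}]$ forces $\cos\tfrac{\theta}{2}\ge\sin\tfrac{\theta}{2}\ge 0$, the moduli $|a_{\vec\mu}|$ are non-increasing in $|\vec\mu|$; in particular the maximal amplitude is $|a_{\mu_0}|=|a_{\vec 0}|=(\cos\tfrac{\theta}{2})^n$, so the ordering hypothesis of Proposition~\ref{thm:abelian} is satisfied.

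Finally I invoke Proposition~\ref{thm:abelian}: conclusive single-state exclusion of the orbit is possible if and only if $|a_{\mu_0}|\le\sum_{\vec\mu\neq\vec 0}|a_{\vec\mu}|$. Adding $(\cos\tfrac{\theta}{2})^n$ to both sides and using the binomial theorem $\sum_{\vec\mu\in\{0,1\}^n}a_{\vec\mu}=(\cos\tfrac{\theta}{2}+\sin\tfrac{\theta}{2})^n$, this is equivalent to $2(\cos\tfrac{\theta}{2})^n\le(\cos\tfrac{\theta}{2}+\sin\tfrac{\theta}{2})^n$; dividing through by $(\cos\tfrac{\theta}{2})^n>0$ (legitimate since $\cos\tfrac{\theta}{2}\ge \tfrac{1}{\sqrt 2}$ on $[0,\tfrac{\pi}{2}]$) gives $(1+\tan\tfrac{\theta}{2})^n\ge 2$, as claimed.

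There is no deep obstacle: the only points requiring care are (i) verifying that the PBR ensemble is genuinely the $\{I,Z\}^{\otimes n}$-orbit of $\ket{\psi_0}^{\otimes n}$, which hinges on $Z\ket{\psi_0}=\ket{\psi_1}$; (ii) identifying the abstract character basis $\{\ket{v_\mu}\}$ of Proposition~\ref{thm:abelian} with the concrete computational basis and reading off the amplitudes; and (iii) checking the amplitude-ordering hypothesis, after which the equivalence is the one-line binomial computation above.
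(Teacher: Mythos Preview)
Your proof is correct and follows essentially the same approach as the paper's own proof: identify the PBR ensemble as the orbit of $\ket{\psi_0}^{\otimes n}$ under the Abelian group $\{I,Z\}^{\otimes n}$, apply Proposition~\ref{thm:abelian} using the computational basis as the character basis, and reduce the amplitude inequality to $(1+\tan\tfrac{\theta}{2})^n\ge 2$ via the binomial theorem. You simply spell out in more detail the identification of the basis, the explicit amplitudes $a_{\vec\mu}$, and the verification of the ordering hypothesis, which the paper leaves implicit.
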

\begin{proof}
    Denote $\ket{\psi}:=\ket{\psi_0}^{\otimes n}$. By definition, we have $\cos\frac{\theta}{2}\geq\sin\frac{\theta}{2}\geq 0$, which satisfies the conditions of amplitudes in the computational basis. Then, by Proposition~\ref{thm:abelian}, conclusive single-state exclusion can be done if and only if
    \begin{equation}
        \begin{aligned}
            (\cos\frac{\theta}{2})^n\leq(\cos\frac{\theta}{2}+\sin\frac{\theta}{2})^n-(\cos\frac{\theta}{2})^n.
        \end{aligned}
    \end{equation}
    Therefore, it is equivalent to $(1+\tan\frac{\theta
    }{2})^n\geq 2$, which completes this proof.
\end{proof}

As shown above, we have a symmetry perspective to understand the PBR theorem. Notably, the results related to Abelian groups or general groups could enable a natural extension of the PBR game to higher-dimensional quantum systems or more general scenarios. This is helpful for the exploration of fundamental properties inherent in quantum mechanics.

\subsection{Zero-error Communication}\label{sec:zero_error_communication}
For a given finite group $G$, we have the state set, i.e., the group-orbit $\cE_\psi^{\cH}$ with respect to certain seed pure state $\ket{\psi}$. Let us consider the communication model, $ \cN: $g$ \to \ket{u_g}$, in which the sender, Alice, transmits a quantum state $\ket{u_g}\in \cE_\psi^{\cH}$ encoded with classical information $g\in G$ to Bob, who retrieves the information through quantum measurement $\{M_g\}$ at the receiving end as shown in Fig.~\ref{fig:zero_error_model}. \update{Then the key question is to understand how much information this channel can convey perfectly on average in the asymptotic regime.}

\begin{figure}[htbp]
    \centering
    \includegraphics[width=0.75\linewidth]{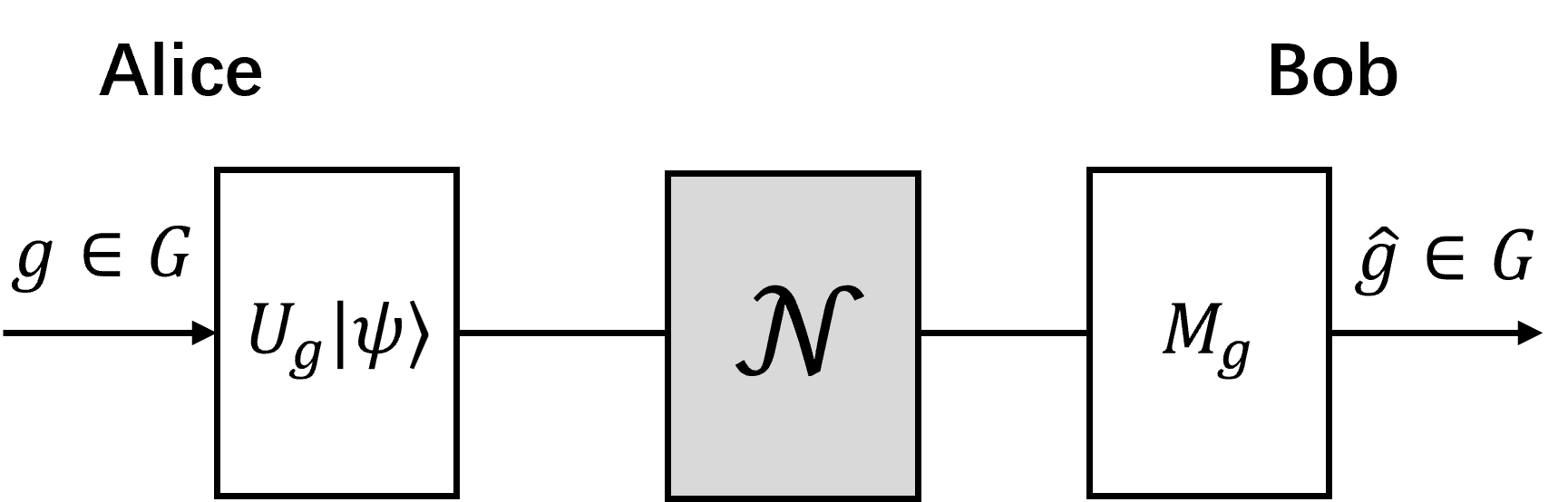}
    \caption{Classical communication about finite group information.}
    \label{fig:zero_error_model}
\end{figure}

Notice that the classical-quantum channel $\cN$ in this communication model can be characterized by a bipartite graph $\Gamma$, and the classical capacity can be obtained by the fractional packing number of the bipartite graph~\cite{shannon1956zero,cubitt2011superactivation}. Details are provided in the appendix~\ref{apx:zero_error_communication}. Now, we are ready to build the connection between zero-error communication and quantum state exclusion. 

Specifically, we provide a perspective from the conclusive single-state exclusion to measure the capacity of transmitting information as shown in the following result.

\begin{proposition}\label{thm:zero error capacity}
For a classical-quantum channel 
$$\cN: g \to \ket{u_g},$$ if Eq.~\eqref{eq:if_condition} is satisfied, then the feedback-assisted zero-error capacity as well as the classical non-signaling-assisted zero-error capacity of $\cN$ possess the following lower bound:
\begin{align}
    C_{0,F}(\cN) = C_{0,NS}(\cN) \ge \log\frac{|G|}{|G|-1}.
\end{align}
\end{proposition}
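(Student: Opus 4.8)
The plan is to convert the classical-quantum channel $\cN$ into an honest classical channel by measuring its output with the exclusion POVM produced in Theorem~\ref{thm:general_case}, and then to lower bound the zero-error capacity of that classical channel through the fractional packing number of its bipartite graph. Since Eq.~\eqref{eq:if_condition} holds, Theorem~\ref{thm:general_case} furnishes a POVM $\{M_g\}_{g\in G}$ with $\tr[\rho_g M_g]=0$ for every $g$, where $\rho_g:=\ketbra{u_g}{u_g}$. Post-composing $\cN$ with the measurement $\{M_{g'}\}_{g'\in G}$ yields the classical channel $W(g'\,|\,g):=\tr[\rho_g M_{g'}]$ with input and output alphabet $G$, and the exclusion property is exactly that $W(g\,|\,g)=0$ for all $g$. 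Hence, for each output letter $g'$, the set of inputs $N(g'):=\{g\in G:W(g'\,|\,g)>0\}$ that can produce it does not contain $g'$, so $|N(g')|\le|G|-1$.

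I would then invoke the characterization recalled in Appendix~\ref{apx:zero_error_communication}: for a classical channel the non-signalling-assisted zero-error capacity equals $\log\bar\alpha(\Gamma_W)$, where $\bar\alpha(\Gamma_W)=\max\{\sum_{g\in G}v_g:\ v_g\ge0,\ \sum_{g\in N(g')}v_g\le1\ \forall g'\}$ is the fractional packing number of the bipartite graph $\Gamma_W$ of $W$~\cite{shannon1956zero,cubitt2011superactivation}, while by Duan--Severini--Winter~\cite{Duan2015} the feedback- and non-signalling-assisted zero-error capacities of the classical-quantum channel agree, $C_{0,F}(\cN)=C_{0,NS}(\cN)$. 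The uniform assignment $v_g:=1/(|G|-1)$ is feasible, since $\sum_{g\in N(g')}v_g=|N(g')|/(|G|-1)\le1$ for every $g'$, so $\bar\alpha(\Gamma_W)\ge\sum_{g}v_g=|G|/(|G|-1)$ and $C_{0,NS}(W)\ge\log\frac{|G|}{|G|-1}$. Since $W$ is $\cN$ followed by a fixed receiver-side measurement, any non-signalling-assisted zero-error code for $W$ is also one for $\cN$, whence $C_{0,NS}(\cN)\ge C_{0,NS}(W)$; combining with $C_{0,F}(\cN)=C_{0,NS}(\cN)$ gives $C_{0,F}(\cN)=C_{0,NS}(\cN)\ge\log\frac{|G|}{|G|-1}$.

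The delicate ingredient is the capacity characterization rather than the counting: it must be applied in a form that carries no ``$C_0>0$'' caveat, which is why I route the argument through $C_{0,NS}$, for which $C_{0,NS}(W)=\log\bar\alpha(\Gamma_W)$ holds unconditionally, and only afterwards use $C_{0,F}=C_{0,NS}$; one also has to note that the zero-error capacity cannot increase under the post-processing $\cN\mapsto W$. A cleaner variant that bypasses $W$ altogether is to bound the fractional packing number of $\cN$ directly: writing $\ket{\psi}_{\cH\cR}=\bigoplus_{\mu}\tfrac{a_\mu}{\sqrt{d_\mu}}\dket{V_\mu}$, the orbit sum $\sum_{g\in G}\rho_g$ is $G$-invariant, so Schur's lemma (as in the proof of Theorem~\ref{thm:general_case}) gives $\sum_{g\in G}\rho_g=|G|\sum_{\mu\in\hat G}\tfrac{|a_\mu|^2}{d_\mu^2}P_\mu$; Eq.~\eqref{eq:if_condition} then forces $|a_\mu|^2/d_\mu^2\le(|G|-1)/|G|$ for every $\mu$ (immediate when $d_\mu\ge2$, and for $d_\mu=1$ obtained by first showing the offending $\mu$ must equal $\mu_0$ and then applying Cauchy--Schwarz to $\sum_{\mu\ne\mu_0}d_\mu|a_\mu|$ together with $\sum_\mu d_\mu^2\le|G|$), so $\sum_{g}\rho_g\le(|G|-1)I$ and $s_g\equiv1/(|G|-1)$ is feasible for $\bar\alpha(\cN)=\max\{\sum_g s_g:\ s_g\ge0,\ \sum_g s_g\rho_g\le I\}$, giving $\bar\alpha(\cN)\ge|G|/(|G|-1)$ at once. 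In either presentation the arithmetic is routine; the real content is the reduction to the fractional packing number, and the bound is meaningful only for finite $G$, since for a compact Lie group the right-hand side degenerates to $0$.
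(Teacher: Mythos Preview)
Your primary argument is correct and is exactly the paper's approach: construct the classical channel $W(h|g)=\tr[M_h\rho_g]$ from the exclusion POVM of Theorem~\ref{thm:general_case}, use $W(g|g)=0$, and bound its fractional packing number by $|G|/(|G|-1)$. The paper's three-sentence proof leaves the feasible LP assignment $v_g=1/(|G|-1)$, the post-processing monotonicity $C_{0,NS}(\cN)\ge C_{0,NS}(W)$, and the care about the ``$C_0>0$'' caveat for Shannon's feedback formula all implicit; you supply them.

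The alternative in your third paragraph---bounding $\sum_g\rho_g\le(|G|-1)I$ directly from Schur's lemma and Eq.~\eqref{eq:if_condition}---does not appear in the paper and is a genuinely different route. It bypasses the measurement and the classical-channel detour, producing the quantum fractional packing bound $\bar\alpha(\cN)\ge|G|/(|G|-1)$ straight from the amplitude hypothesis, at the price of the extra estimate $|a_\mu|^2/d_\mu^2\le(|G|-1)/|G|$; your Cauchy--Schwarz step for the $d_{\mu_0}=1$ case goes through using $\sum_{\nu\ne\mu_0}d_\nu^2=|G|-1$. The paper's route is shorter since it needs only the bare exclusion property $\Gamma(g|g)=0$, whereas your alternative is more self-contained in that it uses Eq.~\eqref{eq:if_condition} directly rather than via Theorem~\ref{thm:general_case}.
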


\begin{proof}
We can construct a classical channel with a bipartite graph $\Gamma(h|g)=\tr(M_h\ketbra{v_g}{v_g})$, where $\Gamma(g|g)=0$, and the no-signalling assisted zero-error capacity of this channel is its fractional packing number, which is at least $\frac{|G|}{|G|-1}$. Then, $C_{0,F}(\cN) = C_{0,NS}(\cN) \ge \log\frac{|G|}{|G|-1}>1$. 
\end{proof}

\update{This result connects the physical possibility of state exclusion to the information-theoretic concept of zero-error capacity. It's important to note that conclusive state exclusion is a fundamentally different task from state discrimination; it allows a receiver to determine with certainty which state was not sent, rather than which state was sent. The success of conclusive exclusion guarantees that some information can be extracted with perfect certainty, which is the necessary condition for a channel to have a positive zero-error capacity when assisted by resources like feedback or non-signalling correlations.}
Moreover, a larger group size decreases the capacity lower bound, since conclusive exclusion for a larger set reveals less information about the system. We also note that conclusive state exclusion of states generated by multiple channel uses was studied in \cite{Duan2015} to establish necessary and sufficient conditions for positive feedback-assisted capacity when allowing for a constant amount of activating noiseless communication.
\section{Conclusions and discussions}\label{sec:conclusion}

This work establishes explicit criteria for conclusive single-state exclusion under group symmetry constraints, with implications for generalizing the PBR result in quantum foundations and estimating the zero-error capacity of classical-quantum channels. For general symmetries governed by finite groups (or compact Lie groups), conclusive exclusion becomes achievable when the seed pure state is constructed through the superposition of maximally entangled states between the irreducible representation spaces and the corresponding multiplicity spaces, governed by the group structure. For Abelian groups, building on the general sufficient condition, we further demonstrate a necessary and sufficient condition related to the seed pure state, showing that exclusion feasibility reduces to balancing the maximum amplitude against the cumulative summation of remaining amplitudes. This leads to a simple proof of the PBR result~\cite{pusey2012reality} and generalizes it to higher-dimensional quantum systems through amplitude threshold conditions. We have also demonstrated that these results bridge abstract symmetry principles to operational communication models. We have established a lower bound on the feedback-assisted or non-signalling-assisted zero-error capacity of classical-quantum channels characterized by group actions, revealing the fundamental achievability of transmitting information via conclusive state exclusion. Our results may also shed light for the quantum hypothesis exclusion~\cite{Ji2025}.

Future work should extend these criteria to mixed-state ensembles to address decoherence effects for practical quantum information processing. It will also be interesting to explore the conclusive exclusion of quantum operations. Moreover, it remains an open challenge to develop more sophisticated and detailed characterizations of information or communication capacity based on multi-state conclusive exclusion.

\vspace{2mm}
\textit{Note added.} 
\update{While finishing this manuscript, we became aware of a closely related work~\cite{Arnau2025qse} that independently investigated the quantum state exclusion under group action settings. They gave a necessary and sufficient condition in terms of the Gram matrix $G$, in which one needs to calculate the eigenvalues of $G$ to determine whether the state set can be perfectly excluded. Our work, in contrast, focuses on providing explicit and operational conditions derived from group representation theory, such as those concerning the specific form of the seed pure state and POVMs. This approach is intended to help better understand the fundamental limits of information extraction by revealing the required physical intuition, for instance, the constraints on the seed state's amplitudes. It also allows us to further explore applications in quantum foundations, such as generalizing the PBR result, and in quantum communication, by establishing a lower bound for the zero-error capacity of group-generated channels, which are distinct contributions of our manuscript.}
\section*{Acknowledgement}
\update{We sincerely thank the anonymous referee of Physical Review A for very helpful comments that helped us improve the manuscript.}
We would like to thank Yin Mo and Chengkai Zhu for their helpful comments. XW would like to thank Runyao Duan for helpful discussions on conclusive state exclusion of the Abelian group. This work was partially supported by the National Key R\&D Program of China (Grant No.~2024YFB4504004), the Guangdong Provincial Quantum Science Strategic Initiative (Grant No.~GDZX2403008, GDZX2403001), the Guangdong Provincial Key Lab of Integrated Communication, Sensing and Computation for Ubiquitous Internet of Things (Grant No.~2023B1212010007), the Quantum Science Center of Guangdong-Hong Kong-Macao Greater Bay Area, and the Education Bureau of Guangzhou Municipality.

\bibliography{ref}


\onecolumngrid
\appendix
\setcounter{subsection}{0}
\setcounter{table}{0}
\setcounter{figure}{0}

\vspace{1cm}

\renewcommand{\theequation}{S\arabic{equation}}
\renewcommand{\thesubsection}{\normalsize{\arabic{subsection}}}
\renewcommand{\theproposition}{S\arabic{proposition}}
\renewcommand{\thedefinition}{S\arabic{definition}}
\renewcommand{\thefigure}{S\arabic{figure}}
\setcounter{equation}{0}
\setcounter{table}{0}
\setcounter{section}{0}
\setcounter{proposition}{0}
\setcounter{definition}{0}
\setcounter{figure}{0}

\section{Elements of Representation Theory}\label{sec:pre-rep}
\begin{definition}[Unitary Representation]
    Let $G$ be a group and $\cH$ a Hilbert space. The unitary representation of $G$ on $\cH$ is a group homomorphism $\mathbf{R}:g\to U_g$, where $U_g$ is unitary for all $g\in G$.
\end{definition}
Notice that a group action describes how a group $G$ permutes elements of a set $X$, while a group representation is a specific type of action in which $G$ acts linearly on a Hilbert space $\cH$. The group action mentioned in this work refers to the unitary representation of groups.

\begin{lemma}[Schur lemma]
    Let $U_g$ and $V_g$ be irreducible representations of the group $G$ on Hilbert spaces $\cH$ and $\cK$, respectively. Let $O:\cH\to\cK$ be an operator such that $OU_g=V_gO$ for all $g\in G$. If $U_g$ and $V_g$ are equivalent representations, then $O=\lambda I$, where $I$ is an identity operator; otherwise, $O=0$.
\end{lemma}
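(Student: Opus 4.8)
The plan is to prove both cases by exploiting the two $G$-invariant subspaces canonically attached to an intertwiner $O$, namely its kernel $\ker O\subseteq\cH$ and its range $\operatorname{ran}O\subseteq\cK$, and then to use irreducibility to collapse each of these to a trivial option. First I would establish the invariance. For any $v\in\ker O$ the hypothesis gives $O U_g v=V_g O v=0$, so $U_g v\in\ker O$; hence $\ker O$ is invariant under $U_g$, and irreducibility of $U_g$ forces $\ker O\in\{0,\cH\}$. Dually, every element of $\operatorname{ran}O$ is of the form $Ow$, and $V_g(Ow)=O(U_g w)\in\operatorname{ran}O$, so $\operatorname{ran}O$ is invariant under $V_g$, and irreducibility of $V_g$ forces $\operatorname{ran}O\in\{0,\cK\}$.

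For the inequivalent case I would argue by contradiction. Suppose $O\neq0$. Then $\ker O\neq\cH$, so $\ker O=0$ and $O$ is injective; likewise $\operatorname{ran}O\neq0$, so $\operatorname{ran}O=\cK$ and $O$ is surjective. Thus $O$ is a bijective operator satisfying $OU_g=V_gO$, i.e. an isomorphism intertwining the two representations, which exhibits $U_g$ and $V_g$ as equivalent and contradicts the hypothesis. Hence $O=0$.

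For the equivalent case, after identifying $\cH$ and $\cK$ through the equivalence isomorphism I may take $V_g=U_g$ on a common space, so that $O$ commutes with every $U_g$. Working over $\CC$, the operator $O$ possesses an eigenvalue $\lambda$; the shifted operator $O-\lambda I$ also commutes with each $U_g$ (since $I$ does) and is therefore again an intertwiner of $U_g$ with itself. Its kernel is nonzero (it contains an eigenvector) and invariant, hence equals $\cH$ by irreducibility, which forces $O-\lambda I=0$, i.e. $O=\lambda I$.

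The step I expect to be the main obstacle is the production of the eigenvalue $\lambda$ in the equivalent case: this is exactly where algebraic closedness of $\CC$ enters, and the conclusion genuinely fails over $\RR$. In the compact-Lie-group and infinite-dimensional settings that the paper also targets, the bare eigenvalue argument must be upgraded to a spectral-theoretic one — most cleanly by restricting to the finite-dimensional irreducible blocks furnished by the isotypical decomposition employed in the main text, so that $O$ acts on a finite-dimensional complex space and the characteristic polynomial supplies $\lambda$. The invariance and inequivalence parts, by contrast, are purely algebraic and carry over verbatim.
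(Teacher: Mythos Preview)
Your argument is the standard textbook proof of Schur's lemma and is correct. Note, however, that the paper does not actually supply a proof of this statement: in the appendix it is simply stated as a classical lemma from representation theory, to be used as a tool in deriving the commutant structure (Theorem~S1 in the appendix). So there is no ``paper's own proof'' to compare against; the lemma is treated as background.

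Your discussion of the eigenvalue step is apt: the existence of $\lambda$ relies on working over $\CC$, and in the finite-group or compact-Lie-group context the irreducible subspaces are finite-dimensional, so the characteristic-polynomial argument applies without further analytic input. One small refinement: in the equivalent case the paper's phrasing ``$O=\lambda I$'' tacitly assumes $\cH$ and $\cK$ have already been identified via the intertwining isomorphism, which you correctly make explicit; without that identification the conclusion should read $O=\lambda T$ for $T$ the fixed equivalence.
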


The Schur lemma is a building block for investigating the general form of an operator commuting with a group representation.

\begin{theorem}[Commutant]
    Let $U_g$ be a unitary representation of a group $G$ and $O$ be an operator satisfying that $[O,U_g]=0$ for all $g\in G$. Then, we have
    \begin{equation}\label{eq:iso_decompo}
        \begin{aligned}
            O=\bigoplus_{\mu \in \text{Irr}(G)} I_{\mu}\otimes O_\mu,
        \end{aligned}
    \end{equation}
    where $O_\mu$ denotes an operator on the multiplicity space of the irreducible representation $U_g^{\mu}$.
\end{theorem}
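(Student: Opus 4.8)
The plan is to reduce the statement to Schur's lemma by means of the isotypical decomposition already introduced in the paper. First I would invoke complete reducibility of the representation $U_g$ (Maschke's theorem in the finite case, Peter--Weyl in the compact Lie case) to fix the decomposition $\cH \cong \bigoplus_{\mu \in \text{Irr}(G)} \cH_\mu \otimes \mathbb{C}^{m_\mu}$, in which $\cH_\mu$ carries the irreducible representation $U_g^\mu$, $m_\mu$ is its multiplicity, and $U_g = \bigoplus_\mu U_g^\mu \otimes I_{m_\mu}$. Relative to this decomposition an arbitrary operator $O$ splits into blocks $O_{\mu\nu}\colon \cH_\nu \otimes \mathbb{C}^{m_\nu} \to \cH_\mu \otimes \mathbb{C}^{m_\mu}$, and the hypothesis $[O,U_g]=0$ becomes the family of intertwining relations $O_{\mu\nu}\,(U_g^\nu \otimes I_{m_\nu}) = (U_g^\mu \otimes I_{m_\mu})\,O_{\mu\nu}$ for all $\mu,\nu$ and all $g\in G$.

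Next I would treat the two types of blocks separately. Pairing the multiplicity factors of $O_{\mu\nu}$ against orthonormal basis vectors $|i\rangle \in \mathbb{C}^{m_\nu}$ and $|j\rangle \in \mathbb{C}^{m_\mu}$ produces operators $B^{\mu\nu}_{ji} \colon \cH_\nu \to \cH_\mu$ with $O_{\mu\nu} = \sum_{i,j} B^{\mu\nu}_{ji}\otimes |j\rangle\langle i|$, and the intertwining relation above is equivalent to $U_g^\mu B^{\mu\nu}_{ji} = B^{\mu\nu}_{ji} U_g^\nu$ for every $i,j,g$. For $\mu \neq \nu$ the representations $U_g^\mu$ and $U_g^\nu$ are inequivalent, so Schur's lemma (as stated in the excerpt) forces $B^{\mu\nu}_{ji}=0$ and hence $O_{\mu\nu}=0$. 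For $\mu=\nu$ the representation is (trivially) equivalent to itself, so Schur's lemma gives $B^{\mu\mu}_{ji} = \lambda^\mu_{ji}\, I_{\cH_\mu}$ for scalars $\lambda^\mu_{ji}$; collecting them into the operator $O_\mu := \sum_{i,j}\lambda^\mu_{ji}\,|j\rangle\langle i|$ on the multiplicity space $\mathbb{C}^{m_\mu}$ yields $O_{\mu\mu} = I_{\cH_\mu}\otimes O_\mu$. Reassembling the blocks gives $O = \bigoplus_{\mu\in\text{Irr}(G)} I_{\cH_\mu}\otimes O_\mu$, which is \eqref{eq:iso_decompo} with $I_\mu := I_{\cH_\mu}$.

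The only real subtlety — and the step I would be most careful to spell out — is that the representations $U_g^\mu \otimes I_{m_\mu}$ living on each isotypic component are \emph{not} irreducible, so the excerpt's Schur's lemma cannot be applied to them directly; the passage to matrix entries against fixed multiplicity-space basis vectors is precisely the device that reduces everything to intertwiners between the genuinely irreducible $U_g^\mu$. Beyond that, the proof only uses that we work over $\mathbb{C}$ (so that the ``commutes with an irrep $\Rightarrow$ scalar'' form of Schur's lemma is available) and that the multiplicity spaces come with orthonormal bases, both of which are already in force in the paper's setting; no genuine obstacle remains.
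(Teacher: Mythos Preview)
Your proof is correct and is the standard reduction to Schur's lemma via block decomposition against a multiplicity-space basis. The paper, however, does not actually supply a proof of this theorem: it is stated in the appendix immediately after Schur's lemma with only the remark that ``the Schur lemma is a building block for investigating the general form of an operator commuting with a group representation,'' and no further argument is given. Your write-up therefore fills in precisely the details the paper leaves implicit, and does so along the obvious route the paper intends.
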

Therefore, it is straightforward to see that the group average operator $\overline{O}:=\frac{1}{|G|}\sum_{g\in G}U_g O U_g^\dagger$ can be decomposed into the form like Eq.~\eqref{eq:iso_decompo}.

The direct sum of irreducible representations leads to the decomposition of the Hilbert space $\cH$, i.e., isotypical decomposition. Specifically, suppose $G=\textbf{SU}(d)$ is a unitary group, and the carry space is $(\mathbb{C}^d)^{\otimes n}$. Then, we have
\begin{equation}
    \begin{aligned}
        (\mathbb{C}^d)^{\otimes n}=\bigoplus_{\mu\in Y_n^d}\cH_\mu\otimes \mathbb{C}^{m_\mu},
    \end{aligned}
\end{equation}
where $Y_n^d$ denotes the set of Young Diagrams. Therefore, for the case of $n=3$ and $d=3$, we have three irreducible representations, denoted by Young Diagrams, $\yng(3)$, $\yng(2,1)$, and $\yng(1,1,1)$, respectively. The dimensions of irreducible representation spaces and the corresponding multiplicity spaces are $d_{\yng(3)}=10$, $d_{\yng(2,1)}=8$, $d_{\yng(1,1,1)}=1$, $m_{\yng(3)}=1$, $m_{\yng(2,1)}=2$, and $m_{\yng(1,1,1)}=1$. Thus, we have 
\begin{equation}
    \begin{aligned}
        (\mathbb{C}^3)^{\otimes 3}=\cH_{\yng(3)}\oplus\cH_{\yng(2,1)}\oplus \cH_{\yng(2,1)}\oplus \cH_{\yng(1,1,1)}.
    \end{aligned}
\end{equation}

\section{Zero-error communication}\label{apx:zero_error_communication}
For a classical channel $\cN$ that have finite input and output alphabets ($X$ and $Y$), we write $\cN(y|x):=P(\text{output}=y | \text{ input}=x)$ as transition probabilities of this channel. The bipartite graph (transition graph) $\Gamma$ of this classical channel $\cN$ is given by the adjacency matrix on $X \times Y $:
 \begin{eqnarray}
 \Gamma(y|x)=
\begin{cases}
1, &\cN(y|x)>0 \cr 0, &\cN(y|x)=0\end{cases}
\end{eqnarray}

The fractional packing number of a bipartite graph $\Gamma(y|x)$ is
\begin{equation}
{\alpha ^*}(\Gamma) = \max \sum\limits_x {v(x)} \ \text{  s.t. }\sum\limits_x {v(x)\left\lceil {\Gamma(y|x)} \right\rceil }  \le 1,v(x) \ge 0,
\end{equation}
which is the maximum total weight allowed in fractional packing of $\Gamma$.
This number has been proved to be the feedback assisted zero-error capacity as well as the classical non-signalling assisted zero-error capacity of a classical channel $\cN$~\cite{shannon1956zero,cubitt2011superactivation}.

\end{document}